\documentclass[journal,twoside,web]{ieeecolor}
\usepackage{tikz}
\usepackage{lcsys}
\usepackage{cite}
\usepackage{amsmath,amssymb,amsfonts}
\usepackage{algorithmic}
\usepackage{graphicx}
\usepackage{textcomp}


\usepackage[caption=false]{subfig}

\usepackage{dsfont}
\usepackage{balance}
\usepackage{graphicx}

\usepackage{array}
\usepackage{multirow}
\usepackage{bigdelim}
\usepackage{mdwtab}
\usepackage{physics}
\usepackage{booktabs}

\newcommand\RV[1]{{#1}}

\usepackage{accents}

\newtheorem{theorem}{Theorem}[section]
\newtheorem{lemma}{Lemma}[section]

\newtheorem{corollary}{Corollary}[section]

\def\BibTeX{{\rm B\kern-.05em{\sc i\kern-.025em b}\kern-.08em
    T\kern-.1667em\lower.7ex\hbox{E}\kern-.125emX}}
\markboth{\journalname, VOL. XX, NO. XX, XXXX 2017}
{Author \MakeLowercase{\textit{et al.}}: Preparation of Papers for textsc{IEEE Control Systems
Letters} (November 2021)}

\begin{document}
\title{ISC-POMDPs: Partially Observed Markov Decision Processes with Initial-State\\ Dependent Costs}
\author{Timothy L. Molloy
\thanks{The author is with the CIICADA Lab, School of Engineering, The Australian National University (ANU), Canberra, ACT 2601, Australia (e-mail: timothy.molloy@anu.edu.au)}}

\pagestyle{empty}

\newcommand\copyrighttext{%
  \footnotesize \textcopyright 2025 IEEE. Personal use of this material is permitted.
  Permission from IEEE must be obtained for all other uses, in any current or future
  media, including reprinting/republishing this material for advertising or promotional
  purposes, creating new collective works, for resale or redistribution to servers or
  lists, or reuse of any copyrighted component of this work in other works.}
\newcommand\copyrightnotice{%
\begin{tikzpicture}[remember picture,overlay]
\node[anchor=south,yshift=5pt] at (current page.south) {\fbox{\parbox{\dimexpr\textwidth-\fboxsep-\fboxrule\relax}{\copyrighttext}}};
\end{tikzpicture}%
}

\maketitle
\thispagestyle{empty}
\copyrightnotice

\begin{abstract}
We introduce a class of partially observed Markov decision processes (POMDPs) with costs that can depend on both the value and (future) uncertainty associated with the initial state.
These Initial-State Cost POMDPs (ISC-POMDPs) enable the specification of objectives relative to \emph{a priori} unknown initial states, which is useful in applications such as robot navigation, controlled sensing, and active perception, that can involve controlling systems to revisit, remain near, or actively infer their initial states.
By developing a recursive Bayesian fixed-point smoother to estimate the initial state that resembles the standard recursive Bayesian filter, we show that ISC-POMDPs can be treated as POMDPs with (potentially) belief-dependent costs.
We demonstrate the utility of ISC-POMDPs, including their ability to select controls that resolve (future) uncertainty about (past) initial states, in simulation.
\end{abstract}

\begin{IEEEkeywords}
Markov processes, optimal control, stochastic systems. 
\end{IEEEkeywords}

\section{Introduction}
\label{sec:introduction}
\IEEEPARstart{T}{he} initial state of a dynamical system often has important practical significance \cite{Wang2023,Mariottini2011,Xue2022,Shi2025}.
For example, the initial position of a vehicle often corresponds to its owner's residence \cite{Wang2023}; the initial pose of a robot has enabled safe or recoverable navigation, path planning, and mapping \cite{Mariottini2011,Xue2022}; and, the initial configuration of teams of agents can enable recognition of their roles or intent \cite{Shi2025}.
This significance has given rise to partially observed stochastic optimal control problems with objectives tied directly to initial states, such as the problem of controlling a system to hinder inference of its initial state to preserve privacy in networked control systems \cite{Wang2023}, or the problem of controlling a system to improve inference of its initial state for active sensing and perception in target tracking and robotics \cite{Shi2025,Mariottini2011,Xue2022}.
However, a general framework for solving initial-state objective problems is lacking.
We therefore introduce and investigate Initial-State Cost Partially Observed Markov Decision Processes (ISC-POMDPs).

Partially Observed Markov Decision Processes (POMDPs) have costs that depend on the \emph{values} of their (current) partially observed state, and are typically solved by constructing equivalent Markov decision processes (MDPs) with the \emph{belief} (i.e., the conditional distribution of the current state given observed measurements computed via a recursive Bayesian filter) as their ``state'' process.
\RV{POMDPs have been generalized to encompass cost functions (denoted $\rho$ by convention) that are explicit functions of the current belief, not just state value, leading to $\rho$-POMDPs that can also be solved by constructing belief MDPs (cf. \cite{Araya2010,Fehr2018}).}
$\rho$-POMDPs have proved important when controlling state uncertainty is explicitly an objective, such as in active perception, controlled sensing, or privacy-based applications (cf.\ \cite{Araya2010, Fehr2018,Shi2025,Molloy2023} and \cite[Chapter 8]{Krishnamurthy2016}).
Nevertheless, both POMDPs and $\rho$-POMDPs, and their associated belief MDPs, have Markovian dynamics and costs in the sense that the current state (or belief) determines the current cost and evolution of the state (or belief).
Initial-state costs are, however, non-Markovian.

Limited progress has been made in incorporating initial-state costs into ($\rho$-)POMDPs.
Notably, \cite{Shi2025} aimed to minimize the entropy of the conditional distribution of the initial state given (all) measurements at a terminal time by redefining the belief to be this conditional distribution.
However, it is well-known from Bayesian smoothing that the conditional distribution of the initial state does not have a Markovian form computable via a Bayesian filter recursion (cf.\ \cite[Section 4.1.1]{Cappe2005}).
The approach of \cite{Shi2025} therefore does not lead to reformulations of initial-state cost problems as belief MDPs.
More generally, \cite{Belardinelli2022} considered non-Markovian costs in partially observed problems by generalizing the approach of \cite{Bacchus1996} developed for fully observed problems that involves augmenting the underlying state so the dynamics and costs become Markovian.
This approach suffers from the fact that even if the state can be modified so that it is Markovian, it may not have a corresponding Markovian belief.
For example, if the costs only depend on the initial state (or its uncertainty, as in \cite{Shi2025}), then taking this original initial state as a new (static) modified state leads to a trivial constant Markovian state process but the conditional distribution of this modified state (i.e., the original initial state) does not itself have a Markovian form computable via a Bayesian filter recursion (cf.\ \cite[Section 4.1.1]{Cappe2005}).

The key contribution of this paper is the introduction of ISC-POMDPs with costs that can depend on both the value and (future) uncertainty associated with initial states.
We establish that ISC-POMDPs with (arbitrary) initial-state dependent costs admit reformulations and solutions as standard ($\rho$-)POMDPs with augmented state processes consisting of the both initial and current state, and that their associated belief can be computed with a recursive (fixed-point) Bayesian smoother that resembles the standard Bayesian filter.
Surprisingly, state augmentation and recursive Bayesian smoothing have not previously been used to solve initial-state cost problems.

This paper is structured as follows.
We introduce ISC-POMDPs in Section \ref{sec:problem}. 
We reformulate and analyse ISC-POMDPs as ($\rho$-)POMDPs in Section \ref{sec:reforms}.
We provide simulations in Section \ref{sec:results}, and conclusions in Section \ref{sec:conclusion}.

\paragraph*{Notation}
Random variables are denoted by capital letters (e.g., $X$), and their realizations by lower-case letters (e.g., $x$).
The probability mass function (pmf) of $X$ is $p(x)$, the joint pmf of $X$ and $Y$ is $p(x, y)$, and the conditional pmf of $X$ given $Y = y$ is $p(x|y)$ or $p(x | Y = y)$.
The expectation of a function $f$ of $X$ is $E [f(X)]$, and the conditional expectation of $f$ under $p(x|y)$ is $E[f(X) | y]$ or $E[f(X) | Y = y]$.
For a finite set $\mathcal{S}$, the set of all probability distributions (or pmfs) over $\mathcal{S}$ is $\Delta(\mathcal{S})$.

\section{Preliminaries and Problem Formulation}
\label{sec:problem}

We first revisit ($\rho$-)POMDPs and introduce ISC-POMDPs.

\subsection{POMDP and $\rho$-POMDP Preliminaries}

Let $X_k$ for $k \geq 0$ be a discrete-time first-order Markov chain with finite state space $\mathcal{X} \triangleq \{1, 2, \ldots, N_x\}$.
Let the initial state $X_0$ be distributed according to the pmf $\pi_0 \in \Delta(\mathcal{X})$ with components $\pi_0(x_0) \triangleq P(X_0 = x_0)$.\footnote{Note that $\pi_0 \in \Delta(\mathcal{X})$ can be viewed as a $N_x$-dimensional probability vector (a vector with nonnegative components that sum to $1$), and thus $\Delta(\mathcal{X})$ can be viewed as the $(N_x - 1)$-dimensional simplex.}
Let the state $X_k$ evolve according to the state-transition probabilities:
\begin{align}
    \label{eq:stateProcess}
    A^{x,\bar{x}}(u) \triangleq p( X_{k+1} = x | X_k = \bar{x}, U_k = u)
\end{align}
for $k \geq 0$ and $x,\bar{x} \in \mathcal{X}$, with the controls $U_k = u$ belonging to the finite set $\mathcal{U} \triangleq \{1, 2, \ldots, N_u\}$.
The state $X_k$ is (partially) observed through stochastic observations $Y_k$ for $k \geq 1$ taking values in the finite set $\mathcal{Y} \triangleq \{1, 2, \ldots, N_y\}$.
The measurements $Y_k$ are mutually conditionally independent given the states $X_k$ and controls $U_{k-1}$, and distributed according to the measurement probabilities:
\begin{align}
    \label{eq:obsProcess}
    B^x (y, u) \triangleq p( Y_k = y | X_k = x, U_{k-1} = u)
\end{align}
for $k \geq 1$, $x \in \mathcal{X}$, $y \in \mathcal{Y}$, and $u \in \mathcal{U}$.

In (standard infinite-horizon discounted) POMDPs or $\rho$-POMDPs, we may consider the controls $U_k$ to be given by a policy $\mu_\pi : \Delta(\mathcal{X}) \rightarrow \mathcal{U}$ dependent on the \emph{belief} $\pi_k$ about the state $X_k$ given the measurements $Y^k \triangleq \{Y_1, Y_2, \ldots, Y_k\}$ and controls $U^{k-1} \triangleq \{U_0, U_1, \ldots, U_{k-1}\}$ (cf.\ \cite[Section 5.4.1]{Bertsekas2005} and \cite[Chapter 7]{Krishnamurthy2016}).
Specifically, let $U_k = \mu_\pi(\pi_k)$ for $k \geq 0$ where the belief $\pi_{k} \in \Delta(\mathcal{X})$ is a conditional pmf, or $N_x$-dimensional vector, with components $\pi_{k}(x) \triangleq p(X_{k} = x | y^k, u^{k-1})$ satisfying the Bayesian filter recursion
\begin{align}\label{eq:filter}
	\pi_{k+1}(x)
	= \dfrac{B^x(y_{k+1}, u_k) \sum_{\bar{x} \in \mathcal{X}} A^{x,\bar{x}}(u_k) \pi_{k}(\bar{x})}{\sum_{\bar{x}, \tilde{x} \in \mathcal{X}}B^{\tilde{x}}(y_{k+1}, u_k) A^{\tilde{x},\bar{x}}(u_k) \pi_{k}(\bar{x})}
\end{align}
for $x \in \mathcal{X}$ and $k \geq 0$ given $\pi_0$.
We use $\pi_{k+1} = \Pi(\pi_k, u_k, y_{k+1})$ to denote the filter \eqref{eq:filter}, and note that $p(y_{k+1}|\pi_k, u_k)$ is the denominator in \eqref{eq:filter}.
We denote the set of all (deterministic belief) policies $\mu_\pi$ as $\mathcal{P}_\pi$, with $p_{\mu,\pi}$ being the probability law induced by $\mu_\pi \in \mathcal{P}_\pi$ and its corresponding expectation being $E_{\mu,\pi} [\cdot]$.
A POMDP (formulated as a belief MDP) then involves finding a policy that solves
\begin{align}
\label{eq:standardPOMDP}
\begin{aligned}
&\inf_{\mu_\pi \in \mathcal{P}_\pi} & & E_{\mu,\pi} \left[ \sum_{k = 0}^{\infty} \gamma^k C \left( \pi_k, U_{k} \right) \right]\\ 
&\mathrm{s.t.} & & \pi_{k+1} = \Pi(\pi_k, U_k, Y_{k+1}), \quad \pi_0 \in \Delta(\mathcal{X}) \\
& & & Y_{k+1} | \pi_k, U_k \sim p(y_{k+1} | \pi_{k}, u_{k})\\
& & & U_k = \mu_\pi(\pi_k) \in \mathcal{U}
\end{aligned}
\end{align}
given a discount factor $\gamma \in (0,1)$ and a cost function $C : \Delta(\mathcal{X}) \times \mathcal{U} \rightarrow \mathbb{R}$ that is the conditional expectation of an underlying (current) state-control cost function $\kappa : \mathcal{X} \times \mathcal{U} \rightarrow \mathbb{R}$; i.e., $C(\pi_k, u_k) \triangleq E [\kappa(X_k, U_k) | \pi_k, U_k = u_k] = \sum_{x \in \mathcal{X}} \pi_k(x) \kappa(x, u_k)$.
In contrast, a $\rho$-POMDP (formulated as a belief MDP) involves finding a policy solving \cite{Araya2010}
\begin{align}
\label{eq:rhoPOMDP}
\begin{aligned}
&\inf_{\mu_\pi \in \mathcal{P}_\pi} & & E_{\mu, \pi} \left[ \sum_{k = 0}^{\infty} \gamma^k \rho \left( \pi_k, U_k \right) \right]\\ 
&\mathrm{s.t.} & & \pi_{k+1} = \Pi(\pi_k, U_k, Y_{k+1}), \quad \pi_0 \in \Delta(\mathcal{X}) \\
& & & Y_{k+1} | \pi_k, U_k \sim p(y_{k+1} | \pi_{k}, u_{k})\\
& & & U_k = \mu_\pi(\pi_k) \in \mathcal{U}
\end{aligned}
\end{align}
given a discount factor $\gamma \in (0,1)$ and an \emph{arbitrary} belief-dependent cost function $\rho : \Delta(\mathcal{X}) \times \mathcal{U} \rightarrow \mathbb{R}$.
$\rho$-POMDPs \eqref{eq:rhoPOMDP} generalize POMDPs since the cost function $\rho$ can be any (potentially nonlinear) function of the belief $\pi_k$ whilst the POMDP cost function $C$ is limited to the linear form implied by conditional expectation.
We propose ISC-POMDPs as an extension of ($\rho$-)POMDPs with costs that can depend on both the value and uncertainty associated with the initial state $X_0$.

\subsection{ISC-POMDPs}

To introduce ISC-POMDPs, let us consider the possibility of the controls $U_k$ for $k \geq 0$ being given by a policy $\mu$ that is a (deterministic) function of the measurements and controls $(Y^k, U^{k-1})$ directly, namely, $U_k = \mu(Y^k, U^{k-1})$.
Let the set of all such policies be $\mathcal{P}$, and let the probability law induced by a policy $\mu \in \mathcal{P}$ be $p_{\mu}$ with corresponding expectation $E_{\mu} [\cdot]$.
Let us also define the (joint posterior) conditional pmf of the state $X_k$ and the initial state $X_0$ given the information $(y^k, u^{k-1})$ available at time $k \geq 0$ as $\xi_k \in \Delta(\mathcal{X} \times \mathcal{X})$, where
\begin{align}
	\label{eq:firstInitialBelief}
	\xi_k(x_0, x_k)
	&\triangleq p(X_0 = x_0, X_k = x_k | y^k, u^{k-1})
\end{align}
for $x_0, x_k \in \mathcal{X}$.
We introduce an ISC-POMDP as the problem of finding a policy that solves
\begin{align}
    \label{eq:iscpomdp}
    \begin{aligned}
    & \inf_{\mu \in \mathcal{P}} & &\hspace{-0.3cm} E_\mu \left[ \sum_{k = 0}^\infty \gamma^k \left[ c(X_0, X_k, U_k) + \psi(\xi_k, U_k) \right] \right]\\
    &\mathrm{s.t.} & &\hspace{-0.3cm}   X_{k+1} | X_k, U_k \sim A^{x_{k+1},x_k}(u_k), X_0 \sim \pi_0 \in \Delta(\mathcal{X})\\
    & & &\hspace{-0.3cm} Y_{k+1} | X_{k+1}, U_k \sim B^{x_{k+1}}(y_{k+1}, u_k),\\
    & & &\hspace{-0.3cm} U_k = \mu(Y^k, U^{k-1}) \in \mathcal{U}
    \end{aligned}
\end{align}
for a given discount factor $\gamma \in (0,1)$ where $c : \mathcal{X} \times \mathcal{X} \times \mathcal{U} \rightarrow \mathbb{R}$ is an arbitrary cost function dependent on the \emph{values} of the current state $X_k$, initial state $X_0$, and controls $U_k$, and where $\psi : \Delta(\mathcal{X} \times \mathcal{X}) \times \mathcal{U} \rightarrow \mathbb{R}$ is an arbitrary function of the (joint) \emph{posterior pmf} $\xi_k$ and the controls $U_k$.

ISC-POMDPs \eqref{eq:iscpomdp} generalize POMDPs \eqref{eq:standardPOMDP} by introducing costs $c(X_0, X_k, U_k)$ that can depend on the value of the initial state $X_0$, not just on the value of the current state $X_k$.
This dependence enables objectives to specified with respect to the initial state; for example, the cost $c(x_0, x, u) = |x_0 - x|$ for $x_0, x \in \mathcal{X}$ and $u \in \mathcal{U}$ specifies the objective of keeping (future) states $X_k$ close to the (potentially \emph{a priori} unknown) initial state $X_0$.
ISC-POMDPs \eqref{eq:iscpomdp} also generalize $\rho$-POMDPs \eqref{eq:rhoPOMDP} by introducing costs $\psi(\xi_k, U_k)$ that can depend on the joint posterior $\xi_k$ of the initial $X_0$ and current state $X_k$ at any time $k \geq 0$ (and hence also their marginals).
This generalization enables the optimization of uncertainty measures associated with the initial and current states, and exploits that (future) measurements $Y_k$ for $k \geq 1$ can contain information about the (past) initial state $X_0$.
For example, Bayesian (fixed-point smoother) estimates of the initial state at future times $k > 0$ can be improved by solving the ISC-POMDP \eqref{eq:iscpomdp} with the initial-state entropy cost $	\psi(\xi_k, U_k)
	= H(X_0 | y^k, u^{k-1})
	\triangleq - \sum_{x_0 \in \mathcal{X}} p(x_0 | y^k, u^{k-1}) \log p(x_0 | y^k, u^{k-1})$ where $p(x_0 | y^k, u^{k-1}) = \sum_{x_k \in \mathcal{X}} \xi_k(x_0, x_k)$.

We propose solving ISC-POMDPs \eqref{eq:iscpomdp} by reformulating them as ($\rho$-)POMDPs with augmented state processes consisting of both the original initial state $X_0$ and the original current state $X_k$.
We shall show that this choice of augmented state leads to an associated (augmented) belief, equivalent to the joint posterior pmf $p(x_0, x_k | y^k, u^{k-1})$, that is Markovian and given by a Bayesian filter recursion, thus enabling the solution of ISC-POMDPs using standard ($\rho$-)POMDP techniques (cf.\ \cite{Kurniawati2008,Araya2010,Fehr2018,Zheng2023}).
We note that our choice of this augmented state is \emph{necessary} to reformulate ISC-POMDPs as ($\rho$-)POMDPs since insight from recursive Bayesian smoothing implies that only the joint posterior pmf $p(x_0, x_k | y^k, u^{k-1})$ has a recursive form whilst the marginal posterior pmf $p(x_0 | y^k, u^{k-1})$ does not (cf.\ \cite[Section 4.1.1]{Cappe2005}).
Interestingly, this insight implies that the joint posterior pmf $p(x_0, x_k | y^k, u^{k-1})$ must be used as the belief for ISC-POMDPs, even when their costs only depend on the initial state $X_0$ or its posterior pmf $p(x_0 | y^k, u^{k-1})$, as in the case of the initial-state entropy $H(X_0 | y^k, u^{k-1})$. 


\section{Reformulation and Solution of ISC-POMDPs}
\label{sec:reforms}

In this section, we reformulate ISC-POMDPs as ($\rho$-)POMDPs with an augmented state and a Markovian belief.

\subsection{Augmented State and Belief Construction}

Let us introduce the augmented state
\begin{align}
	\label{eq:augmentedState}
    S_k
    &\triangleq X_0 + N_x(X_k - 1) \in \mathcal{S}
\end{align}
for $k \geq 0$, with corresponding augmented state space $\mathcal{S} \triangleq \{1, 2, \ldots, N_s\}$ where $N_s \triangleq N_x \times N_x$.
\RV{The augmented state $S_k$ provides an invertible representation of the pair $(X_0,X_k)$ in the sense that given $(X_0,X_k)$, we can compute $S_k$ via \eqref{eq:augmentedState}, and given $S_k$ we can compute $(X_0, X_k)$ via
$
	X_0
	= S_k - N_x \left\lfloor (S_k -1)/N_x \right\rfloor
	\text{ and }
	X_k
	= (S_k - X_0)/N_x + 1
$ where $\lfloor \cdot \rfloor$ denotes the floor function.
Let $\mathcal{L}(x_0,x_k) \triangleq x_0 + N_x(x_k - 1) \in \mathcal{S}$ for $x_0,x_k \in \mathcal{X}$ be the mapping implied by \eqref{eq:augmentedState}.}\footnote{\RV{The mapping $\mathcal{L}$ is analogous to those used to construct linear, vectorized, or ``flattened'' indices of matrices (or tensors), with $S_k$ being the linear index of the pair $(X_0,X_k)$. More generally, the augmented state could be constructed with any invertible (i.e., bijective) mapping $\mathcal{L} : \mathcal{X} \times \mathcal{X} \rightarrow \mathcal{S}$.}}

To derive the probabilistic structure of the augmented states $S_k$, let the pmf describing the initial augmented state $S_0$ be $\xi_0 \in \Delta(\mathcal{S})$ with $\xi_0(s) \triangleq p(S_0 = s)$ for $s \in \mathcal{S}$.
Similarly, let the transition probabilities for the augmented states be $\overline{A}^{s,\bar{s}}(u)
	\triangleq p(S_{k+1} = s | S_k = \bar{s}, U_k = u)$ for $k \geq 0$, $s,\bar{s} \in \mathcal{S}$, and $u \in \mathcal{U}$.
Finally, let the measurement probabilities for the augmented states be $\overline{B}^s(y,u) \triangleq p(Y_k = y | S_k = s, U_{k-1} = u)$ for $k \geq 1$, $s \in \mathcal{S}$, $y \in \mathcal{Y}$ and $u \in \mathcal{U}$.
These probabilities are developed in the following lemma.

\begin{lemma}
	\label{lemma:augmentedProbs}
	Under the constraints in the ISC-POMDP \eqref{eq:iscpomdp}, the initial augmented-state probabilities satisfy
\begin{align} \label{eq:augmentedInitial}
    \xi_0(s)
    = \begin{cases}
        \pi_0(x) & \text{if } s = \mathcal{L}(x,x),\\
        0 & \text{otherwise}
    \end{cases}
\end{align}
for $s \in \mathcal{S}$ and $x \in \mathcal{X}$; the augmented state-transition probabilities satisfy
\begin{align}\label{eq:augmentedTransitions}
	\overline{A}^{s,\bar{s}}(u)
	&= \begin{cases}
		A^{x,\bar{x}} (u) & \text{if } s = \mathcal{L}(x_0, x) \text{ \& } \bar{s} = \mathcal{L}(x_0, \bar{x}),\\
		0 & \text{otherwise}
	\end{cases}
\end{align}
for $s, \bar{s} \in \mathcal{S}$, $x,\bar{x}, x_0 \in \mathcal{X}$, and $u \in \mathcal{U}$; and, the augmented-state measurement probabilities satisfy
\begin{align}\label{eq:augmentedMeasurements}
	\overline{B}^s(y,u)
	&= B^x(y,u)
\end{align}
for $s = \mathcal{L}(x_0, x) \in \mathcal{S}$, $y \in \mathcal{Y}$, and $u \in \mathcal{U}$ where $x_0, x \in \mathcal{X}$.
\end{lemma}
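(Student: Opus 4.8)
The plan is to exploit two structural facts about the encoding $\mathcal{L}$ and the initial state $X_0$. First, since $\mathcal{L}$ is a bijection (as established immediately above the lemma), conditioning on the augmented state $S_k = \mathcal{L}(x_0, x)$ is equivalent to jointly conditioning on $\{X_0 = x_0, X_k = x\}$, so each augmented probability can be rewritten as a probability involving the original pair. Second, the initial state $X_0$ is \emph{static}: it is the same random variable at every time index and therefore never evolves. Each of the three claims then reduces to a short computation combining bijectivity with the Markov structure \eqref{eq:stateProcess} and observation structure \eqref{eq:obsProcess} of the ISC-POMDP \eqref{eq:iscpomdp}.

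For the initial distribution \eqref{eq:augmentedInitial}, I would note that at $k = 0$ the current state coincides with the initial state, so $S_0 = \mathcal{L}(X_0, X_0)$ takes values only on the ``diagonal'' $\{\mathcal{L}(x,x) : x \in \mathcal{X}\}$. Injectivity of $\mathcal{L}$ then gives $\xi_0(\mathcal{L}(x,x)) = P(X_0 = x) = \pi_0(x)$, with $\xi_0(s) = 0$ for every $s$ not of this diagonal form, which is exactly \eqref{eq:augmentedInitial}.

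For the transition probabilities \eqref{eq:augmentedTransitions}, I would decompose $s = \mathcal{L}(x_0', x)$ and $\bar{s} = \mathcal{L}(x_0, \bar{x})$ and use bijectivity to write
\begin{align*}
\overline{A}^{s,\bar{s}}(u) = p\!\left(X_0 = x_0', X_{k+1} = x \mid X_0 = x_0, X_k = \bar{x}, U_k = u\right).
\end{align*}
Because $X_0$ is the same variable in the event and in the conditioning, this factors as $\mathds{1}[x_0' = x_0]\, p(X_{k+1} = x \mid X_0 = x_0, X_k = \bar{x}, U_k = u)$, and because the state process is first-order Markov, the conditioning on $X_0$ can be dropped, leaving $A^{x,\bar{x}}(u)$. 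Combining gives $\overline{A}^{s,\bar{s}}(u) = \mathds{1}[x_0' = x_0]\, A^{x,\bar{x}}(u)$, matching \eqref{eq:augmentedTransitions}. The one step warranting care, and which I expect to be the main (though mild) obstacle, is justifying that $X_{k+1}$ is conditionally independent of $X_0$ given $(X_k, U_k)$; I would address this by noting that \eqref{eq:stateProcess} specifies the one-step law of $X_{k+1}$ purely in terms of $(X_k, U_k)$, so that the augmented pair $(X_0, X_k)$ is itself a controlled Markov chain.

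Finally, for the measurement probabilities \eqref{eq:augmentedMeasurements}, I would substitute $s = \mathcal{L}(x_0, x)$ and observe that
\begin{align*}
\overline{B}^s(y,u) = p\!\left(Y_k = y \mid X_0 = x_0, X_k = x, U_{k-1} = u\right) = B^x(y,u),
\end{align*}
where the second equality uses that the observation model \eqref{eq:obsProcess} makes $Y_k$ conditionally dependent only on $(X_k, U_{k-1})$---in particular conditionally independent of $X_0$ given $X_k$---so the extra conditioning on $X_0$ is immaterial. This disposes of the third case and completes the lemma.
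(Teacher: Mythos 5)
Your proposal is correct and follows essentially the same route as the paper's own proof: use the bijectivity of $\mathcal{L}$ to rewrite each augmented-state probability as a joint probability over $(X_0, X_k)$, extract the indicator forcing the $X_0$-components to agree, and then invoke the conditional independence of $X_{k+1}$ (resp.\ $Y_k$) from $X_0$ given $(X_k, U_k)$ (resp.\ $(X_k, U_{k-1})$) implied by the first-order Markov and observation structure. No gaps.
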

\begin{proof}
By definition $S_0 = \mathcal{L}(X_0,X_0)$, and thus for any $s = \mathcal{L}(x, \bar{x}) \in \mathcal{S}$ where $x,\bar{x} \in \mathcal{X}$ we have that
\begin{align*}
    p(S_0 = s) = p(X_0 = x, X_0 = \bar{x})
    = \begin{cases}  
    	\pi_0(x) & \text{if } x = \bar{x}\\
        0 & \text{otherwise,}
    \end{cases}
 \end{align*}
 proving \eqref{eq:augmentedInitial}.
Consider any $k \geq 0$, $u \in \mathcal{U}$, $s = \mathcal{L}(x_0, x) \in \mathcal{S}$ and $\bar{s} = \mathcal{L}(\bar{x}_0, \bar{x}) \in \mathcal{S}$ where $x,\bar{x}, x_0, \bar{x}_0 \in \mathcal{X}$, then
\begin{align*}
	&p(S_{k+1} = s | S_k = \bar{s}, U_k = u)\\
	&= p(X_{k+1} = x, X_0 = x_0 | X_k = \bar{x}, X_0 = \bar{x}_0, U_k = u)\\
	&=	\begin{cases}
		p(X_{k+1} = x | X_k = \bar{x}, X_0 = x_0, U_k = u) & \text{if } x_0 = \bar{x}_0 \\
		0 & \text{otherwise,}
	\end{cases}
\end{align*}
proving \eqref{eq:augmentedTransitions} since $p(X_{k+1} = x | X_k = \bar{x}, X_0 = x_0, U_k = u) = p(X_{k+1} = x | X_k = \bar{x}, U_k = u) = \overline{A}^{s,\bar{s}}(u)$ due to $X_{k+1}$ and $X_0$ being conditionally independent given $X_k$ and $U_k$.
Finally, consider any $k \geq 1$, $s = \mathcal{L}(x_0, x_0) \in \mathcal{S}$, $y \in \mathcal{Y}$, and $u \in \mathcal{U}$ where $x_0, x \in \mathcal{X}$, then $p(Y_k = y | S_k = s, U_{k-1} = u)
		= p(Y_k = y | X_k = x, U_{k-1} = u)$ since $Y_k$ and $X_0$ are conditionally independent given $X_k$ and $U_{k-1}$, proving \eqref{eq:augmentedMeasurements}.
The proof is complete.
\end{proof}

The conditional pmf $\xi_k$ in \eqref{eq:firstInitialBelief} corresponds to the \emph{(augmented) belief} for the augmented state $S_k$.
In a slight abuse of notation, we shall therefore denote the conditional pmf $\xi_k$ introduced in \eqref{eq:firstInitialBelief} as $\xi_k \in \Delta(\mathcal{S})$ with $\xi_k(s) \triangleq p(S_k = s | y^k, u^{k-1})$ for $s \in \mathcal{S}$ and $k \geq 0$.
The augmented state in \eqref{eq:augmentedState} enables us to treat $\xi_k$ as an $N_s$-dimensional probability vector and $\Delta(\mathcal{S})$ as the $(N_s - 1)$-dimensional simplex.
The following lemma establishes that the augmented belief $\xi_k$ evolves via a simple recursion resembling the Bayesian filter \eqref{eq:filter}.

\begin{lemma}
	\label{lemma:augmentedFilter}
	Under the constraints in the ISC-POMDP \eqref{eq:iscpomdp}, the augmented belief $\xi_k$ evolves via the recursion
\begin{align}
	\label{eq:augmentedFilter}
	\xi_{k+1}(s)
	&= \dfrac{\overline{B}^{s}(y_{k+1}, u_k) \sum_{\bar{s} \in \mathcal{S}} \overline{A}^{s,\bar{s}}(u_k) \xi_{k}(\bar{s})}{\sum_{\bar{s}, \tilde{s} \in \mathcal{S}}\overline{B}^{\tilde{s}}(y_{k+1}, u_k) \overline{A}^{\tilde{s},\bar{s}}(u_k) \xi_{k}(\bar{s})}
\end{align}
for $s \in \mathcal{S}$ and $k \geq 0$ given the initial augmented belief $\xi_0 \in \Delta(\mathcal{S})$, the augmented state-transition probabilities \eqref{eq:augmentedTransitions} and the augmented measurement probabilities \eqref{eq:augmentedMeasurements}.
Furthermore,
\begin{align}
	\label{eq:augmentedNormalization}
	p(y_{k+1} | \xi_k, u_k)
	&= \sum_{\bar{s}, \tilde{s} \in \mathcal{S}}\overline{B}^{\tilde{s}}(y_{k+1}, u_k) \overline{A}^{\tilde{s},\bar{s}}(u_k) \xi_{k}(\bar{s})
\end{align}
for $y_{k+1} \in \mathcal{Y}$, $\xi_k \in \Delta(\mathcal{S})$, and $u_k \in \mathcal{U}$.
\end{lemma}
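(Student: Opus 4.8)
The plan is to observe that Lemma~\ref{lemma:augmentedProbs} has already supplied the three ingredients that define a controlled hidden Markov model on $\mathcal{S}$: an initial pmf $\xi_0$, a one-step transition kernel $\overline{A}^{s,\bar{s}}(u)$, and an observation kernel $\overline{B}^{s}(y,u)$. Once I confirm that the augmented process $(S_k, Y_k)$ inherits the conditional-independence structure of a POMDP, the recursion \eqref{eq:augmentedFilter} is literally the standard Bayesian filter \eqref{eq:filter} with the triple $(\pi_k, A, B)$ over $\mathcal{X}$ replaced by $(\xi_k, \overline{A}, \overline{B})$ over $\mathcal{S}$, and \eqref{eq:augmentedNormalization} is its normalization constant. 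The bulk of the argument is therefore bookkeeping, and the only genuine issue is verifying that the augmented process really is a POMDP.

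First I would establish the two conditional independences needed to run a filter: that $S_{k+1}$ is conditionally independent of $(S_0,\dots,S_{k-1}, Y^k)$ given $(S_k, U_k)$, and that $Y_{k+1}$ is conditionally independent of $(S_0,\dots,S_k, Y^k)$ given $(S_{k+1}, U_k)$. The key observation is that, because each $S_j = \mathcal{L}(X_0, X_j)$ shares the common component $X_0$, conditioning on the augmented history $(S_0,\dots,S_k, Y^k, U^k)$ is equivalent to conditioning on the original history $(X_0, X_1, \dots, X_k, Y^k, U^k)$; under this conditioning $X_0$ is deterministic, so the augmented Markov and observation properties collapse onto the original ones encoded in the constraints of \eqref{eq:iscpomdp}, namely the first-order controlled dynamics of $X_{k+1}$ given $(X_k, U_k)$ and the observations $Y_{k+1}$ given $(X_{k+1}, U_k)$. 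This is the step I expect to require the most care, because it is precisely where the pair $(X_0, X_k)$---rather than the initial-state marginal alone, which is famously non-recursive (cf.\ \cite[Section 4.1.1]{Cappe2005})---must be shown to form a bona fide first-order controlled Markov chain, which is what the augmentation \eqref{eq:augmentedState} secures.

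With the POMDP structure in hand, I would derive the recursion by applying Bayes' rule to $\xi_{k+1}(s) = p(S_{k+1}=s\mid y^{k+1}, u^{k})$. The numerator factors into an observation update times a time update: the likelihood $p(y_{k+1}\mid S_{k+1}=s, u_k) = \overline{B}^{s}(y_{k+1}, u_k)$ by the observation independence, and the one-step predictive pmf $p(S_{k+1}=s\mid y^{k}, u^{k}) = \sum_{\bar{s}\in\mathcal{S}} \overline{A}^{s,\bar{s}}(u_k)\, \xi_k(\bar{s})$ by the Markov property together with $p(S_k=\bar{s}\mid y^{k}, u^{k}) = p(S_k=\bar{s}\mid y^{k}, u^{k-1}) = \xi_k(\bar{s})$, the latter because the future control $U_k$ carries no information about $S_k$. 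Summing the numerator over $s \in \mathcal{S}$ gives the denominator $p(y_{k+1}\mid y^{k}, u^{k})$, which simultaneously normalizes \eqref{eq:augmentedFilter} and, upon identifying $p(y_{k+1}\mid \xi_k, u_k)$ with $p(y_{k+1}\mid y^{k}, u^{k})$, yields \eqref{eq:augmentedNormalization}. This last identification is legitimate because $\xi_k$ is a sufficient statistic for $(y^k, u^{k-1})$ in the evaluation of these quantities.
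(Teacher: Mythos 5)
Your proposal is correct and takes essentially the same route as the paper's proof: both establish that the augmentation $S_k=\mathcal{L}(X_0,X_k)$ inherits the controlled-Markov and observation conditional independences from the original model (including the point that $U_k=\mu(Y^k,U^{k-1})$ adds no information about $S_k$), and then derive \eqref{eq:augmentedFilter} via Bayes' rule for the measurement update and the law of total probability for the prediction step, with \eqref{eq:augmentedNormalization} as the normalizing constant. The only difference is presentational --- you package the conditional independences as ``the augmented process is a POMDP'' before invoking the standard filter, while the paper verifies the two intermediate identities directly --- so no gap.
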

\begin{proof}
	Consider any $k \geq 0$.
	To establish \eqref{eq:augmentedFilter}, we show 
\begin{align}\label{eq:firstEq}
    p(s_{k+1} | y^{k+1}, u^{k})
    &= \dfrac{p(y_{k+1} | s_{k+1}, u_k) p(s_{k+1} | y^k, u^k)}{p(y_{k+1} | y^k, u^{k})}
\end{align}
where
\begin{align}\label{eq:secondEq}
	p(s_{k+1} | y^k, u^k)
	&= \sum_{s_k \in \mathcal{S}} p(s_{k+1} | s_k, u_k) p(s_k | y^k, u^{k-1})
\end{align}
since \eqref{eq:augmentedFilter} follows from \eqref{eq:firstEq} and \eqref{eq:secondEq} via the definitions of the augmented state-transition probabilities \eqref{eq:augmentedTransitions}, augmented likelihoods \eqref{eq:augmentedMeasurements}, and $\xi_{k+1}(s_{k+1}) = p(s_{k+1} | y^{k+1}, u^{k})$.

To see that \eqref{eq:firstEq} holds, note that Bayes' rule gives
\begin{align*}
	p(s_{k+1} | y^{k+1}, u^{k})
	&= \dfrac{p(y_{k+1} | s_{k+1}, y^k, u^k) p(s_{k+1} | y^k, u^k)}{p(y_{k+1} | y^k, u^{k})}\\
	&= \dfrac{p(y_{k+1} | s_{k+1}, u_k) p(s_{k+1} | y^k, u^k)}{p(y_{k+1} | y^k, u^{k})}
\end{align*}
where the last line holds because $Y_{k+1}$ and $(Y^k, U^k)$ are conditionally independent given $X_{k+1}$ with $U_k = \mu(Y^k,U^{k-1})$, and thus $Y_{k+1}$ and $(Y^k, U^k)$ are conditionally independent given $S_{k+1} = \mathcal{L}(X_0, X_{k+1})$.
Next, \eqref{eq:secondEq} holds since
\begin{align*}
	p(s_{k+1} | y^k, u^k)
	&= \sum_{s_k \in \mathcal{S}} p(s_{k+1} | s_k, y^k, u^k) p(s_k | y^k, u^k)\\
	&= \sum_{s_k \in \mathcal{S}} p(s_{k+1} | s_k, u_k) p(s_k | y^k, u^{k-1})
\end{align*}
where: i) $X_{k+1}$ and $(Y^k, U^{k-1})$ are conditionally independent given $X_k$, and thus $S_{k+1} = \mathcal{L}(X_0, X_{k+1})$ and $(Y^k, U^{k-1})$ are conditionally independent given $S_k = \mathcal{L}(X_0, X_k)$; and ii) $S_k$ and $U_k$ are conditionally independent given $(Y^k, U^{k-1})$ since $U_k = \mu(Y^k, U^{k-1})$.
Thus \eqref{eq:firstEq} and \eqref{eq:secondEq} hold, implying \eqref{eq:augmentedFilter}, with Bayes' rule and the law of total probability giving \eqref{eq:augmentedNormalization}.
The proof is complete.
\end{proof}

We shall use the shorthand $\xi_{k+1} = \Xi(\xi_k, u_k, y_{k+1})$ to denote the recursion in \eqref{eq:augmentedFilter} since it (surprisingly) resembles the Bayesian filter \eqref{eq:filter} but for the augmented state $S_k$ with measurements $Y_k$ and controls $U_k$.
Interestingly, \eqref{eq:augmentedFilter} is equivalently a recursive fixed-point Bayesian smoother for the initial state $X_0$, with similar recursive smoothers explored in \cite[Section 4.1.1]{Cappe2005} and references therein, but not exploited in POMDPs (with controls).
We next show that \eqref{eq:augmentedFilter} enables the reformulation of ISC-POMDPs \eqref{eq:iscpomdp} as ($\rho$-)POMDPs.

\subsection{Augmented ($\rho$-)POMDP Reformulation}

Our main result reformulating ISC-POMDPs follows.

\begin{theorem}
\label{theorem:MDP}
Consider the ISC-POMDP \eqref{eq:iscpomdp}.
Define the augmented-belief cost function
\begin{align}
    \label{eq:augmentedBeliefCost}
   	&\overline{\rho}(\xi, u)
    \triangleq \psi(\xi,u) + \sum_{s \in \mathcal{S}} \xi(s) c(s, u)
\end{align}
for $\xi \in \Delta(\mathcal{S})$ and $u \in \mathcal{U}$, where in a slight abuse of notation we define the augmented cost function $c(s,u) \triangleq c(x_0, x, u)$ for $u \in \mathcal{U}$ and $s = \mathcal{L}(x_0, x) \in \mathcal{S}$.
Then the ISC-POMDP \eqref{eq:iscpomdp} is equivalent to the augmented-belief $\rho$-POMDP:
\begin{align}
\label{eq:iscMDP}
\begin{aligned}
&\inf_{\overline{\mu} \in \overline{\mathcal{P}}} & & E_{\overline{\mu}} \left[ \sum_{k = 0}^{\infty} \gamma^k \overline{\rho}(\xi_k, U_k) \right]\\ 
&\mathrm{s.t.} & &  \xi_{k+1} = \Xi\left( \xi_{k}, U_{k}, Y_{k+1} \right), \quad \xi_0 \in \Delta(\mathcal{S}) \\
& & & Y_{k+1} | \xi_k, U_k \sim p(y_{k+1} | \xi_k, u_k)\\
& & & U_k = \overline{\mu}(\xi_k) \in \mathcal{U}
\end{aligned}
\end{align}
where the optimization is over (deterministic) policies $\overline{\mu} : \Delta(\mathcal{S}) \rightarrow \mathcal{U}$ that are functions of the augmented belief $\xi$, with $\overline{\mathcal{P}}$ being the set of all such policies.
\end{theorem}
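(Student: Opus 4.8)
The plan is to prove the equivalence in two stages: first establishing that the two discounted objectives coincide for every fixed policy once the state-value cost is re-expressed through the augmented belief, and then establishing that optimizing over the full-information policy class $\mathcal{P}$ yields the same infimum as optimizing over the belief-feedback class $\overline{\mathcal{P}}$.

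For the cost equivalence, I would use the tower property of conditional expectation together with the reinterpretation of $\xi_k$ (from \eqref{eq:firstInitialBelief} and the identification $\xi_k(s) = p(S_k = s \mid y^k, u^{k-1})$) as precisely the conditional pmf of the augmented state $S_k = \mathcal{L}(X_0, X_k)$ given $(Y^k, U^{k-1})$. Since each admissible control $U_k = \mu(Y^k, U^{k-1})$ is deterministic given this history, conditioning on $(Y^k, U^{k-1})$ fixes $U_k$ and yields
\begin{align*}
E_\mu\!\left[ c(X_0, X_k, U_k) \mid Y^k, U^{k-1} \right]
= \sum_{s \in \mathcal{S}} \xi_k(s)\, c(s, U_k),
\end{align*}
using the notational identification $c(s,u) = c(x_0,x,u)$ for $s = \mathcal{L}(x_0,x)$. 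Because $\psi(\xi_k, U_k)$ is already measurable with respect to $(Y^k, U^{k-1})$ (as $\xi_k$ is a deterministic function of the history via the filter $\Xi$), adding it and taking total expectation gives $E_\mu[c(X_0,X_k,U_k) + \psi(\xi_k,U_k)] = E_\mu[\overline{\rho}(\xi_k,U_k)]$ for each $k$. Multiplying by $\gamma^k$ and summing (the interchange of sum and expectation justified by $\gamma \in (0,1)$ and boundedness of $c$ and $\psi$ on the finite spaces) shows the two discounted objectives agree for every $\mu \in \mathcal{P}$.

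For the policy-class reduction, I would invoke the controlled-Markov structure of the belief supplied by Lemma \ref{lemma:augmentedFilter}: the recursion $\xi_{k+1} = \Xi(\xi_k, U_k, Y_{k+1})$ is a deterministic map of $(\xi_k, U_k, Y_{k+1})$, and $Y_{k+1}\mid \xi_k, U_k \sim p(y_{k+1}\mid \xi_k, u_k)$ depends on the past only through $(\xi_k, U_k)$; hence $\{\xi_k\}$ is a fully observed controlled Markov chain on $\Delta(\mathcal{S})$ whose one-step cost $\overline{\rho}(\xi_k, U_k)$ likewise depends on the history only through $(\xi_k, U_k)$. This identifies the reformulated problem as a standard belief-MDP ($\rho$-POMDP) over the augmented simplex. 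One inclusion is immediate: every $\overline{\mu}\in\overline{\mathcal{P}}$ induces an element of $\mathcal{P}$ because $\xi_k$ is a function of $(Y^k, U^{k-1})$, whence $\inf_{\mu\in\mathcal{P}} \le \inf_{\overline{\mu}\in\overline{\mathcal{P}}}$. The reverse inequality follows from the classical sufficient-statistic result for (belief-)MDPs that restricting to deterministic policies measurable in the current belief is without loss of optimality (cf.\ \cite[Section 5.4.1]{Bertsekas2005} and \cite[Chapter 7]{Krishnamurthy2016}).

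The main obstacle I anticipate is the reverse inequality in the last step: justifying that no history-dependent policy $\mu \in \mathcal{P}$ can outperform the best belief-feedback policy $\overline{\mu} \in \overline{\mathcal{P}}$. This is exactly the POMDP-to-belief-MDP sufficiency argument, and the care lies in verifying that $\xi_k$ is genuinely a sufficient statistic for \emph{both} the running cost and the belief transition kernel. This is guaranteed precisely by Lemmas \ref{lemma:augmentedProbs} and \ref{lemma:augmentedFilter}, which establish that the augmented dynamics, measurements, and hence the belief recursion and its normalization $p(y_{k+1}\mid\xi_k,u_k)$ depend on the history only through $(\xi_k, U_k)$; once this is in place, the reduction is standard.
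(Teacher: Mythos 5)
Your proposal is correct and follows essentially the same route as the paper's proof: the tower property plus the identification of $\xi_k$ as the conditional pmf of $S_k = \mathcal{L}(X_0,X_k)$ given the history to equate the objectives policy-by-policy, followed by the standard belief-MDP sufficiency argument (via Lemma \ref{lemma:augmentedFilter} and the cited results of Bertsekas and Krishnamurthy) to reduce from history-dependent policies $\mathcal{P}$ to belief-feedback policies $\overline{\mathcal{P}}$. The only differences are cosmetic --- you make explicit the interchange of sum and expectation and the easy inclusion $\inf_{\mu\in\mathcal{P}} \le \inf_{\overline{\mu}\in\overline{\mathcal{P}}}$, both of which the paper leaves implicit.
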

\begin{proof}
For $\mu \in \mathcal{P}$, the cost functional in \eqref{eq:iscpomdp} satisfies
\begin{align*}
	&E_\mu \left[ \sum_{k = 0}^\infty \gamma^k \left[ c(X_0, X_k, U_k) + \psi(\xi_k, U_k) \right] \right]\\
	&\;= E_\mu \left[ \sum_{k = 0}^\infty \gamma^k E [ c(X_0, X_k, U_k) + \psi(\xi_k, U_k) | Y^k, U^{k}] \right]\\
	&\;= E_\mu \left[ \sum_{k = 0}^\infty \gamma^k \overline{\rho}(\xi_k, U_k) \right]
\end{align*}
 where the first equality is due to the tower property of expectations; and, the second equality holds due to the (augmented) belief $\xi_k$ being a sufficient statistic for $(Y^k,U^{k-1})$ since $\xi_k$ is a function of $(Y^k,U^{k-1})$ and so
 \begin{align*}
 	&E [ c(X_0, X_k, U_k) + \psi(\xi_k, U_k) | Y^k, U^{k}]\\
 	&\quad= \psi(\xi_k, U_k) + \sum_{s \in \mathcal{X}} \xi_k(s) c(s, U_k) = \overline{\rho}(\xi_k,U_k).
 \end{align*}
 Since $\xi_k$ is a controlled Markov process via Lemma \ref{lemma:augmentedFilter}, POMDP (or belief MDP) results imply that this expectation can be minimized over $\mu \in \mathcal{P}$ under the constraints in \eqref{eq:iscpomdp} by (deterministic) functions $\overline{\mu} \in \overline{\mathcal{P}}$ of $\xi_k$ (cf.\ \cite[Section 5.4.1]{Bertsekas2005} and \cite[Theorem 6.2.2]{Krishnamurthy2016}). 
The proof is complete.
\end{proof}

A special case of Theorem \ref{theorem:MDP} is that when there is no (explicit) belief-dependent cost $\psi$, ISC-POMDPs \eqref{eq:iscpomdp} reduce to POMDPs with states $\mathcal{S}$, measurements $\mathcal{Y}$, controls $\mathcal{U}$, and transition and observations probabilities \eqref{eq:augmentedTransitions} and \eqref{eq:augmentedMeasurements}.
\begin{corollary}
	\label{corollary:augmentedPOMDP}
	If $\psi(\xi,u) = 0$ for $\xi \in \Delta(\mathcal{S})$ and $u \in \mathcal{U}$ in \eqref{eq:iscpomdp}, then \eqref{eq:iscpomdp} is equivalent to the (augmented-state) POMDP
\begin{align}
\label{eq:augmentedPOMDP}
\begin{aligned}
&\inf_{\overline{\mu} \in \overline{\mathcal{P}}} & & E_{\overline{\mu}} \left[ \sum_{k = 0}^{\infty} \gamma^k \overline{C} \left( \xi_k, U_{k} \right) \right]\\ 
&\mathrm{s.t.} & & \xi_{k+1} = \Xi(\xi_k, U_k, Y_{k+1}), \quad \xi_0 \in \Delta(\mathcal{S}) \\
& & & Y_{k+1} | \xi_k, U_k \sim p(y_{k+1} | \xi_{k}, u_{k})\\
& & & U_k = \overline{\mu}(\xi_k) \in \mathcal{U}
\end{aligned}
\end{align}
where $\overline{C}(\xi, u) \triangleq \sum_{s \in \mathcal{S}} \xi(s) c(s,u)$ for $\xi \in \Delta(\mathcal{S})$ and $u \in \mathcal{U}$.
\end{corollary}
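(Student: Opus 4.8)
The plan is to obtain this corollary as an immediate specialization of Theorem \ref{theorem:MDP}. First I would set $\psi(\xi,u) = 0$ in the definition \eqref{eq:augmentedBeliefCost} of the augmented-belief cost, which collapses $\overline{\rho}(\xi,u)$ to exactly $\sum_{s \in \mathcal{S}} \xi(s) c(s,u) = \overline{C}(\xi,u)$. With this substitution, the augmented-belief $\rho$-POMDP \eqref{eq:iscMDP} that Theorem \ref{theorem:MDP} guarantees is equivalent to the ISC-POMDP \eqref{eq:iscpomdp} becomes syntactically identical to \eqref{eq:augmentedPOMDP}, so the equivalence is inherited directly from the theorem without any new argument about the cost functional.

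The one point that warrants a short additional remark is why \eqref{eq:augmentedPOMDP} qualifies as a (standard) POMDP rather than merely a $\rho$-POMDP. To establish this, I would observe that $\overline{C}(\xi,u) = \sum_{s \in \mathcal{S}} \xi(s) c(s,u)$ has precisely the linear, conditional-expectation form characterizing a POMDP cost. Recalling from \eqref{eq:standardPOMDP} that the POMDP cost is $C(\pi_k,u_k) = E[\kappa(X_k,U_k) \mid \pi_k, U_k = u_k] = \sum_{x \in \mathcal{X}} \pi_k(x) \kappa(x,u_k)$, the cost $\overline{C}$ is its exact analogue with the current state $X_k$ replaced by the augmented state $S_k$, the belief $\pi_k$ replaced by the augmented belief $\xi_k$, and the underlying state-control cost $\kappa$ replaced by the augmented cost $c(s,u)$; that is, $\overline{C}(\xi_k, u_k) = E[c(S_k, U_k) \mid \xi_k, U_k = u_k]$. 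Since Lemma \ref{lemma:augmentedFilter} shows that $\xi_k$ evolves via the Bayesian filter recursion $\Xi$ for the augmented state $S_k$ under the probabilities \eqref{eq:augmentedTransitions} and \eqref{eq:augmentedMeasurements}, the problem \eqref{eq:augmentedPOMDP} is exactly the belief-MDP reformulation of the POMDP with state space $\mathcal{S}$, measurements $\mathcal{Y}$, controls $\mathcal{U}$, and state-control cost $c$, as claimed.

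There is no substantive obstacle here, as the result is a direct corollary of Theorem \ref{theorem:MDP}; the only care needed is the bookkeeping that identifies $\overline{C}$ with a genuine (linear) POMDP cost rather than a general belief-dependent $\rho$ cost, thereby upgrading the conclusion from ``$\rho$-POMDP'' to ``POMDP.''
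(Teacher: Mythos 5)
Your proposal is correct and matches the paper's reasoning: the corollary is obtained exactly as you describe, by setting $\psi = 0$ in Theorem \ref{theorem:MDP} so that $\overline{\rho}$ collapses to the linear cost $\overline{C}(\xi,u) = \sum_{s \in \mathcal{S}} \xi(s) c(s,u)$, which is the conditional-expectation form of a standard POMDP cost over the augmented state space $\mathcal{S}$. The paper gives no separate proof for this corollary precisely because it is this immediate specialization, so your additional bookkeeping identifying $\overline{C}$ as a genuine POMDP cost is the same observation the paper makes in the surrounding text.
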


Corollary \ref{corollary:augmentedPOMDP} implies that all techniques for solving or analyzing POMDPs of the form \eqref{eq:standardPOMDP} apply directly to ISC-POMDPs \eqref{eq:iscpomdp} that do not have a belief-dependent cost function $\psi$.
Theorem \ref{theorem:MDP}, more generally, implies that an optimal policy $\mu^* : \Delta(\mathcal{S}) \rightarrow \mathcal{U}$ and value function $V : \Delta(\mathcal{S}) \rightarrow \mathbb{R}$ solving an ISC-POMDP \eqref{eq:iscpomdp} with arbitrary belief-dependent cost function $\psi$ can be found via Bellman's equation
\begin{align}
    \label{eq:augmentedValueFunction}
    V(\xi)
    = \min_{u \in \mathcal{U}} \left\{ \overline{\rho}(\xi, u) + \gamma E \left[ V(\Xi(\xi, u, Y)) | \xi, u \right] \right\}
\end{align}
for all $\xi \in \Delta(\mathcal{S})$, with $\mu^*(\xi)$ being a minimizing argument in \eqref{eq:augmentedValueFunction} (cf.\ \cite[Theorem 6.2.2]{Krishnamurthy2016}).
We next discuss structural results useful for finding solutions to ISC-POMDPs via \eqref{eq:augmentedValueFunction}.

\subsection{Structural Results and Approximate Solutions}

The structural result of foremost utility is that the value function $V$ is concave when $\psi$ is concave (or constant) in $\xi$.

\begin{theorem}
\label{theorem:concave}
Consider the ISC-POMDP \eqref{eq:iscpomdp} reformulated as the $\rho$-POMDP \eqref{eq:iscMDP}.
If $\psi(\cdot, u)$ is concave in $\xi \in \Delta(\mathcal{S})$ for $u \in \mathcal{U}$, then $\overline{\rho}(\cdot,u)$ is concave in $\xi \in \Delta(\mathcal{S})$ for $u \in \mathcal{U}$ and the value function $V$ given by \eqref{eq:augmentedValueFunction} is concave in $\xi \in \Delta(\mathcal{S})$.
\end{theorem}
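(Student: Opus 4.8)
The plan is to treat the two assertions separately: the concavity of $\overline{\rho}$ follows directly from its definition, while the concavity of $V$ will follow by showing the Bellman operator in \eqref{eq:augmentedValueFunction} preserves concavity and then appealing to value iteration. For the first assertion, observe that in \eqref{eq:augmentedBeliefCost} the term $\sum_{s \in \mathcal{S}} \xi(s) c(s,u)$ is linear in $\xi$ for each fixed $u$ and hence concave; adding the concave (or constant) function $\psi(\cdot, u)$ preserves concavity, so $\overline{\rho}(\cdot, u)$ is concave for every $u \in \mathcal{U}$.

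For the value function, I would introduce the Bellman operator $(\mathcal{T} W)(\xi) \triangleq \min_{u \in \mathcal{U}} \{ \overline{\rho}(\xi, u) + \gamma \sum_{y \in \mathcal{Y}} p(y|\xi, u) W(\Xi(\xi, u, y)) \}$ and show that $\mathcal{T}$ maps concave functions to concave functions. The crucial structural fact is that the \emph{unnormalized} filter update is linear in $\xi$: writing $T_{u,y}(\xi)(s) \triangleq \overline{B}^s(y, u) \sum_{\bar{s} \in \mathcal{S}} \overline{A}^{s,\bar{s}}(u) \xi(\bar{s})$ for the numerator in \eqref{eq:augmentedFilter}, both $T_{u,y}(\xi)$ and the normalizer $p(y|\xi,u)$ from \eqref{eq:augmentedNormalization} are linear in $\xi$, with $\Xi(\xi,u,y) = T_{u,y}(\xi)/p(y|\xi,u)$. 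Consequently, for concave $W$, each summand $p(y|\xi,u) W(\Xi(\xi,u,y)) = p(y|\xi,u) W(T_{u,y}(\xi)/p(y|\xi,u))$ is precisely the \emph{perspective} of $W$ composed with the linear map $\xi \mapsto (T_{u,y}(\xi), p(y|\xi,u))$; since the perspective of a concave function is concave and concavity is preserved under affine composition, each summand is concave. Summing over $y$ and adding the concave $\overline{\rho}(\cdot, u)$ then leaves the bracketed term concave for each fixed $u$.

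The step I expect to require the most care is the transition from these per-action concave functions to $\mathcal{T}W$ and thence to $V$. Since the pointwise minimum of concave functions is concave — its hypograph being the intersection of the individual (convex) hypographs — the operator $\mathcal{T}$ preserves concavity. Because $\gamma \in (0,1)$, $\mathcal{T}$ is a contraction with unique fixed point $V$, and value iteration from a concave initializer such as $W_0 \equiv 0$ produces concave iterates $W_{n+1} = \mathcal{T} W_n \to V$; concavity is preserved under pointwise limits, so $V$ is concave. The main subtleties I anticipate are applying the perspective argument only where $p(y|\xi,u) > 0$ and extending to the boundary of $\Delta(\mathcal{S})$ by continuity, and verifying that the standard POMDP value-iteration convergence applies here given the (bounded) $\rho$-POMDP cost, so that the concave limit indeed exists.
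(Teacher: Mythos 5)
Your proof is correct. The first half (concavity of $\overline{\rho}(\cdot,u)$ as a linear term plus a concave $\psi(\cdot,u)$) is exactly the paper's argument. For the second half the paper simply invokes the concavity-preservation result of Araya et al.\ (their Theorem 3.1) once $\overline{\rho}$ is known to be concave, whereas you reprove that result from scratch: you write the Bellman operator explicitly, observe that the unnormalized update $T_{u,y}(\xi)$ and the normalizer $p(y\,|\,\xi,u)$ from \eqref{eq:augmentedNormalization} are linear in $\xi$, identify $p(y\,|\,\xi,u)\,W\bigl(\Xi(\xi,u,y)\bigr)$ as a perspective of $W$ composed with a linear map, and then close the argument with ``pointwise minimum of concave functions is concave'' plus value iteration and preservation of concavity under pointwise limits. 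This is precisely the mechanism underlying the cited theorem, so the routes are mathematically the same; what your version buys is self-containedness and a clear inventory of the structural facts being used (in particular, that Lemma~\ref{lemma:augmentedFilter} gives the augmented belief exactly the bilinear filter structure needed for the perspective argument), at the cost of having to handle the $p(y\,|\,\xi,u)=0$ boundary case and the convergence of value iteration, both of which you correctly flag and which are routine here since $\overline{\rho}$ is real-valued and concave on a compact simplex, hence bounded.
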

\begin{proof}	
	Given \eqref{eq:augmentedBeliefCost}, that $\overline{\rho}(\cdot,u)$ is concave in $\xi \in \Delta(\mathcal{S})$ for $u \in \mathcal{U}$ when $\psi(\cdot,u)$ is concave in $\xi \in \Delta(\mathcal{S})$ for $u \in \mathcal{U}$ holds since it is the sum of concave functions.
	With this concavity, the theorem follows via \cite[Theorem 3.1]{Araya2010}.
\end{proof}

Theorem \ref{theorem:concave} implies that the reformulation in \eqref{eq:iscMDP} of ISC-POMDPs \eqref{eq:iscpomdp} with concave belief-dependent cost functions $\psi (\cdot, u)$ is a $\rho$-POMDP amenable to approximate solution via the approach developed in \cite{Araya2010}.
Indeed, following \cite{Araya2010} and using the concavity of $\overline{\rho}(\cdot,u)$ established in Theorem \ref{theorem:concave}, we can first construct a piecewise-linear concave (PWLC) approximation of $\overline{\rho}(\cdot,u)$ for $u \in \mathcal{U}$, before then using standard POMDP solvers to compute PWLC approximations of the value function $V$ (see \cite[Section 4]{Araya2010} for details).
The approximation errors are bounded if $\overline{\rho}$ satisfies the H{\"o}lder-continuity conditions of \cite[Theorem 4.3]{Araya2010}, and can, in principle, be made arbitrarily small (see \cite[Section 4.2]{Araya2010} for details).
Many popular uncertainty costs are concave and satisfy the conditions of \cite[Theorem 4.3]{Araya2010}.
For example, the initial-state entropy $H(X_0 | y^k, u^{k-1})$ is concave in $\xi_k$, and entropy functionals satisfy the conditions of \cite[Theorem 4.3]{Araya2010} (cf.\ \cite{Araya2010}).
However, if $\psi$ is not concave but is Lipschitz in $\xi$, then recent Lipschitz-based approximations can be used (see \cite{Fehr2018,Demirci2024}).

\RV{Finally, the reformulations of ISC-POMDPs in Theorem \ref{theorem:MDP} and Corollary \ref{corollary:augmentedPOMDP} have state and belief spaces $\mathcal{S}$ and $\Delta(\mathcal{S})$ that scale quadratically with the state space $\mathcal{X}$.
However, they enable the solution of ISC-POMDPs with state-of-the-art offline and online POMDP solvers capable of handling very large state spaces (cf.\ \cite{Fehr2018,Krishnamurthy2016, Kurniawati2008,Zheng2023}).
In contrast, approaches tailored to specific initial-state costs (such as that of \cite{Shi2025} for the entropy $H(X_0 | y^k, u^{k-1})$) have computational and memory requirements that must be carefully managed via parameters such as memory length and number of samples.}

\section{Simulation Experiment}
\label{sec:results}

We now illustrate using ISC-POMDPs to optimize costs defined with respect to an \emph{a priori} unknown initial state $X_0$.

Consider an agent moving in the grid shown in Fig.\ \ref{fig:grid1} that seeks to move to the corner closest to its initial position.
Each cell in the grid is a state in the state space $\mathcal{X} = \{1, \ldots, 16\}$ (enumerated top-to-bottom, left-to-right).
The agent's initial position is \emph{a priori} unknown and distributed uniformly (i.e., $\pi_0$ is uniform).
The agent's controls $\mathcal{U} = \{1, \ldots, 5\}$ correspond to moving one cell in each of the four compass directions, or staying still.
The controls fail (and the agent remains still) with probability $0.2$, and the walls (bold black lines in Fig.\ \ref{fig:grid1}) block movement, with the agent staying still if it attempts to move into them.
The agent receives measurements $\mathcal{Y} = \{1,\ldots,16\}$ corresponding to whether or not a wall is immediately adjacent to its current cell in each of the four compass directions.
A wall is detected when it is present (resp.\ not present) with probability $0.8$ (resp.\ $0.2$).

\begin{figure}[t!]
     \centering
     \subfloat[]{\includegraphics[width=0.44\columnwidth]{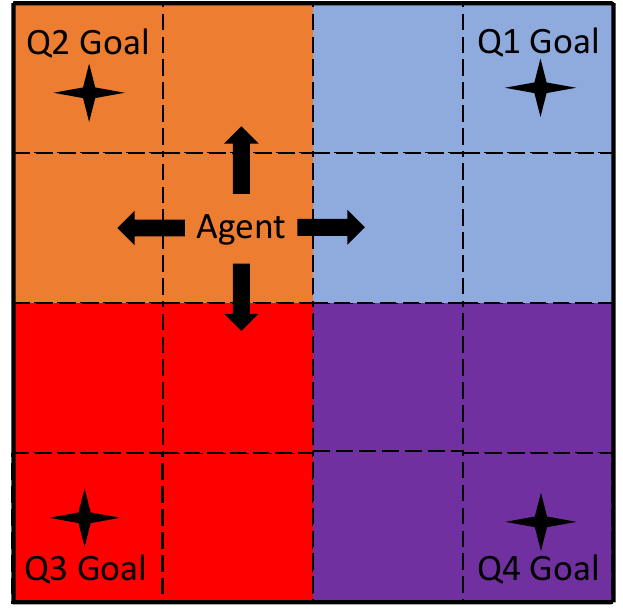}\label{fig:grid1}}
     \hfil
     \subfloat[]{\includegraphics[width=0.44\columnwidth]{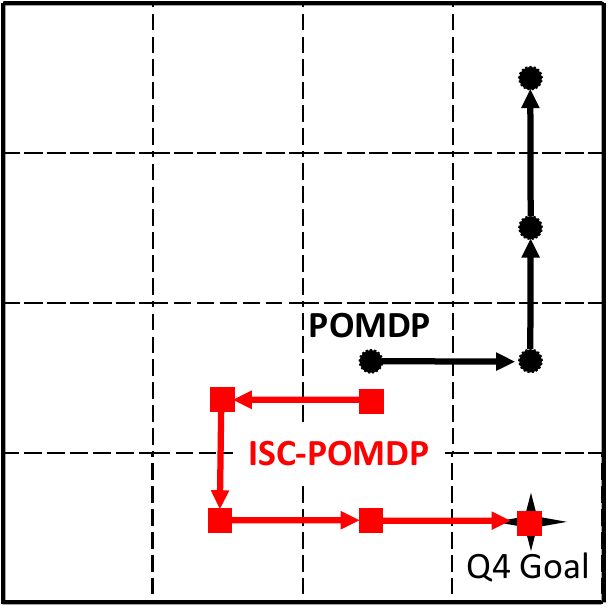}\label{fig:grid2}}
    \caption{Simulation Experiment: (a) Agent must move to goal in corner of quadrant of initial state $X_0$ (agent shown must move to Q2 Goal). (b) Realizations with POMDP moving to corner closet to location $X_k$ for $k = 2$ but ISC-POMDP taking steps to estimate $X_0$ then moving to correct goal (Q4 Goal).}
    \label{fig:grid}
\end{figure}

We encode the agent's problem as an ISC-POMDP \eqref{eq:iscpomdp} \RV{with no belief-dependent cost function (i.e., with $\psi(\xi,u) = 0$ for $\xi \in \Delta(\mathcal{S})$ and $u \in \mathcal{U}$) and with initial-state cost function
\begin{align}
	\label{eq:iscCost}
	c(x_0, x, u) 
	&= \begin{cases}
	 	\mathds{1}(x \neq 1) & \text{if } x_0 \in \{1, 2, 5, 6\} \\
	 	\mathds{1}(x \neq 4) & \text{if } x_0 \in \{3, 4, 7, 8\} \\
	 	\mathds{1}(x \neq 13) & \text{if } x_0 \in \{9, 10, 13, 14\} \\
	 	\mathds{1}(x \neq 16) & \text{if } x_0 \in \{11, 12, 15, 16\} \\
 	   \end{cases}
\end{align}
}%
for $x \in \mathcal{X}$ and $u \in \mathcal{U}$, \RV{where $\mathds{1}(\cdot)$ denotes the indicator function.}
Encoding the agent's objective of moving to the corner closest to its initial position is not directly possible using a (standard) POMDP \eqref{eq:standardPOMDP} since they are limited to current-state dependent costs.
For the purpose of comparison, we therefore encode an approximation of the agent's objective within a POMDP \eqref{eq:standardPOMDP} with cost \RV{$\kappa(x, u) = \mathds{1}(x \not\in \{1,4,13,16\})$} for $u \in \mathcal{U}$.
This cost is only an approximation since it encourages the agent to move to the corner closest to its current location $X_k$, rather than to that closest to $X_0$. \RV{We use SARSOP \cite{Kurniawati2008} and $\gamma = 0.95$ to solve the POMDP and ISC-POMDP (as \eqref{eq:augmentedPOMDP}). Being an offline anytime algorithm, SARSOP had $5$ minutes prior to deployment to compute each policy (and their use online was dominated by belief computation).}

The results of $10,000$ Monte Carlo simulations of the ISC-POMDP and POMDP over $T = 10$ time steps with $X_0 \sim \pi_0$ are summarized in Table \ref{table:results} and Fig.~\ref{fig:results}.
We report the: (average) total discounted cost under the ISC-POMDP cost function in \eqref{eq:iscpomdp} with \eqref{eq:iscCost} (\emph{Discounted Cost}); number of times the agent reaches the correct goal, i.e., the corner closest to its initial position $X_0$ (\emph{No. Goals Reached}); (average) entropy $H(X_0 | y^T, u^{T-1})$ of the final initial-state posterior pmf $p(x_0 | y^T, u^{T-1})$ (\emph{Final Initial-State Entropy}); and, the (average) probability at the (true) initial state $X_0$ in the final (marginal) posterior pmf $p(x_0 | y^T, u^{T-1})$ (\emph{Final Initial-State Prob.}).
Fig.~\ref{fig:results} shows the (average) initial-state entropy $H(X_0 | y^k, u^{k-1})$ and (average) probability at the initial-state in the posterior pmf $p(x_0 | y^k, u^{k-1})$ at other times.
Example realizations of the agent's position are shown in Fig.~\ref{fig:grid2}.

Table \ref{table:results} shows that the ISC-POMDP outperforms the POMDP in terms of the discounted cost and the number of times the agent successfully reaches the corner closest to $X_0$ \RV{(with failures occurring when the measurements do not enable unambiguous estimation of $X_0$)}.
The superior performance of the ISC-POMDP is due to it encoding the agent's exact objective with the initial-state costs \eqref{eq:iscCost} rather than approximating it with the cost \RV{$\kappa(x, u)$}.
Furthermore, the challenge that the ISC-POMDP overcomes (that the POMDP cannot) is that in order for the agent to move to the correct goal, it must first determine its initial state $X_0$.
The lower initial-state entropy and higher posterior probability in Table \ref{table:results} and Figs.~\ref{fig:r1} and \ref{fig:r2} for the ISC-POMDP compared to the POMDP show that the ISC-POMDP selects controls that help to estimate $X_0$, and hence determine the correct goal to move to.
The realizations shown in Fig.~\ref{fig:grid2} illustrate that the ISC-POMDP can take extra steps to estimate the initial state $X_0$ and the correct goal, whilst the POMDP will simply move to the corner closest to its current location $X_k$ when it first becomes confident of its current location.
This experiment illustrates that ISC-POMDPs enable the optimization of costs dependent on an \emph{a priori} unknown initial state $X_0$, which is important since the optimal policy must select controls $U_k$ that resolve uncertainty about the initial state $X_0$, rather than just the current state $X_k$ as in the case of standard ($\rho$-)POMDPs.

\begin{table}[t!]
\centering
\caption{Monte Carlo simulation results. Best values in bold.}
\label{table:results}
\begin{tabular}{@{}ccc@{}}
\toprule
\textbf{Criteria} & \textbf{ISC-POMDP} & \textbf{POMDP} \\ \midrule
\multicolumn{1}{c|}{\textbf{Discounted Cost}}                           & \textbf{6.26}      & 7.91           \\
\multicolumn{1}{c|}{\textbf{No. Goals Reached}}                         & \textbf{8031}      & 4116           \\
\multicolumn{1}{c|}{\textbf{Final Initial-State Entropy}}               & \textbf{1.54}      & 1.72           \\
\multicolumn{1}{c|}{\textbf{Final Initial-State Prob.}}               & \textbf{0.296}      & 0.245           \\ \bottomrule
\end{tabular}
\vspace{-0.4cm}
\end{table}

\begin{figure}[t!]
     \centering
     \subfloat[]{\includegraphics[width=0.85\columnwidth]{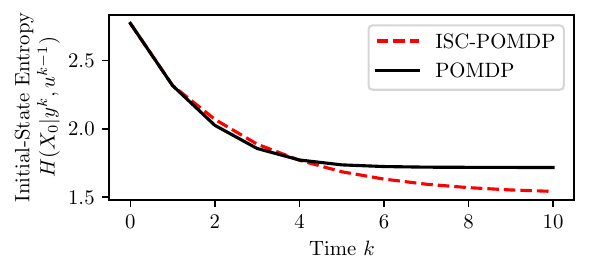}\label{fig:r1}}\\
     \subfloat[]{\includegraphics[width=0.85\columnwidth]{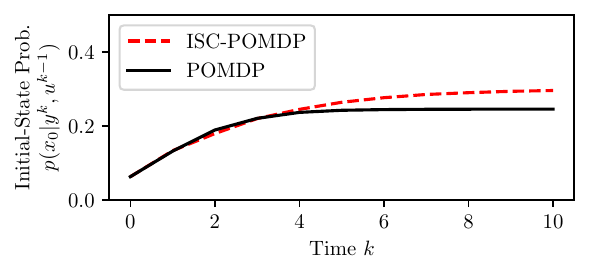}\label{fig:r2}}
    \caption{Simulation Results: (a) Entropy $H(X_0 | y^k, u^{k-1})$ of initial-state posterior pmf $p(x_0 | y^k, u^{k-1})$ . (b) Probability at (true) initial state $X_0$ of posterior pmf $p(x_0 | y^k, u^{k-1})$.}
    \label{fig:results}
\end{figure}

\section{Conclusions and Future Work}
\label{sec:conclusion}
We propose ISC-POMDPs as a class of ($\rho$-)POMDPs with costs dependent on the values and/or uncertainty of initial states.
We use recursive Bayesian smoothing to show that they admit reformulations and solutions as ($\rho$-)POMDPs with augmented states and beliefs.
\RV{Future work will consider problems with continuous state, control, and measurement spaces.}

\bibliographystyle{IEEEtran}
\bibliography{IEEEabrv,Library}

\end{document}